\newtheorem{example}{Example}
\newtheorem{definition}{Definition}
\newtheorem{theorem}{Theorem}
\newtheorem{lemma}{Lemma}
\newtheorem{corollary}{Corollary}
\newcommand{\cp}{\mathrel{\!\vbox{\offinterlineskip\ialign{%
    \hfil##\hfil\cr
    $\scriptstyle\circ$\cr
    \noalign{\kern-1.2ex}
    $\times$\cr
}}\!}}
\newcommand{\tcl}[1]{{#1^*}}
\newcommand{\ucl}[1]{{\mathcal{U}\!\left({#1}\right)}}
\newcommand{\ucmp}[1]{\overline{#1}}
\newcommand{\ucc}[1]{\ucmp{\tcl{#1}}}
\newcommand{\utcl}[1]{\ucl{\tcl{#1}}}
\newcommand{\forced}{\mathop{~\Gamma~}}
\newcommand{\fcl}{\mathop{~\Gamma^*~}}
\tikzstyle{vertex}=[circle,fill=black!10,minimum size=20pt,inner sep=0pt]
\tikzstyle{edge} = [draw,thick,->]
\tikzstyle{tedge} = [draw,dashed,->]
\tikzstyle{uedge} = [draw,thick]
\tikzset{top/.style={baseline=(current bounding box.north)}}
\tikzset{mid/.style={baseline=(current bounding box.center)}}
\tikzset{gscale/.style={xscale=1.4,yscale=1.5}}
\newcommand{\tikzcaption}[1]{\node[below=6mm] at (current bounding box.base) {#1}}
\newenvironment{calgorithm}[2]{%
\begin{center}
\begin{minipage}{\linewidth}
\vspace*{-.5em}
\begin{algorithm}[H]
\caption{~#1}\label{#2}
\begin{algorithmic}[1]
}{%
\end{algorithmic}
\end{algorithm}
\vspace*{-.5em}
\end{minipage}
\end{center}}
\author{Henning Koehler}
\title{A characterization of maximal 2-dimensional subgraphs of transitive graphs}
\affiliation{Massey University, New Zealand}
\keywords{permutation graph, transitive orientation, order dimension}
\begin{document}
\publicationdetails{VOL}{2019}{ISS}{NUM}{SUBM}
\maketitle

\begin{abstract}
~\\
A transitive graph is 2-dimensional if it can be represented as the intersection of two linear orders.
Such representations make answering of reachability queries trivial, and allow many problems that are NP-hard on arbitrary graphs to be solved in polynomial time.
One may therefore be interested in finding 2-dimensional graphs that closely approximate a given graph of arbitrary order dimension.

In this paper we show that the maximal 2-dimensional subgraphs of a transitive graph $G$ are induced by the optimal near-transitive orientations of the complement of $G$.
The same characterization holds for the maximal permutation subgraphs of a transitively orientable graph.
We provide an algorithm that enables this problem reduction in near-linear time, and an approach for enlarging non-maximal 2-dimensional subgraphs, such as trees.
\end{abstract}

\section{Introduction}

Approximating graphs that do not possess a simple structure with graphs that do has been a successful approach for many applications.
So far, trees (or forests) have been the primary tool of choice for this purpose -- e.g. approximating graphs with maximal sub-trees has led to the tree-cover indexing scheme~\cite{DBLP:conf/sigmod/AgrawalBJ89} for answering reachability queries, while the notion of tree-width has led to a wide range of tracktable algorithms for NP-hard problems.

Expanding the class of trees to the class of all 2-dimensional graphs should enable us to find closer approximations.
This does not make answering of reachability queries any harder, and many NP-hard problems that become easy on trees
can still be solved in polynomial time on arbitrary 2-dimensional graphs, including maximal clique, independent set, vertex cover, vertex coloring and clique cover~\cite{DBLP:journals/dm/McConnellS99}.

In the following, we aim to take a small step towards making 2-dimensional graphs a viable tool for approximation.
We will show that for a transitive graph $G$, any orientation of its complement induces a 2-dimensional subgraph of $G$, and that every maximal 2-dimensional subgraph can be induced in this way.
This reduces the problem of finding a maximal 2-dimensional subgraph to that of finding an orientation of its complement with minimal transitive closure.

To make this reduction effective, we provide a near-linear time algorithms for computing the 2-dimen\-sional subgraph induced by an orientation of the complement.
The reverse problem of finding a complement orientation inducing a maximal 2-dimensional subgraph can be solved with existing algorithms.

\section{Background}

To begin, we introduce some helpful tools and terminology.
Throughout the paper we consider all graphs to be simple and directed, and represent undirected edges as pairs of directed edges.

\begin{definition}[inverse, undirected, complement, oriented, transitive]~\\[-2em]
\begin{enumerate}[(i)]
\setlength\itemsep{-2pt}
\item The \emph{inverse} of an edge $(a,b)$ is the edge $(a,b)^{-1}=(b,a)$.
The inverse of a graph $G=(V,E)$ is the graph $G^{-1}=(V,E^{-1})$, where $E^{-1}=\{(a,b)^{-1}\mid (a,b)\in E\}$.
\item We say that a graph $G=(V,E)$ is \emph{undirected} iff $E=E^{-1}$.
The \emph{undirected closure} of a graph $G=(V,E)$ is the graph $\mathcal{U}(G)=(V,E^U)$, where $E^U=E\cup E^{-1}$.
\item The \emph{complement} of $G=(V,E)$ is the graph $\ucmp{G}=(V,V\cp V\setminus E^U)$, where $V\cp V=\{(a,b)\mid a,b\in V, a\neq b\}$.
\item A graph $G=(V,E)$ is \emph{oriented} iff $E\cap E^{-1}=\emptyset$.
A graph $G'=(V,E')$ is an \emph{orientation} of $G=(V,E)$ iff $E'$ is a maximal oriented subset of $E$.
\item A graph $G=(V,E)$ is \emph{transitive} iff for all edges $(a,b),(b,c)\in E$ with $a\neq c$ we also have $(a,c)\in E$.
The \emph{transitive closure} of a graph $G=(V,E)$ is $\tcl{G}=(V,\tcl{E})$, where $\tcl{E}$ is the minimal transitive superset of $E$.
\end{enumerate}
\end{definition}

Any acyclic transitive graph can be viewed as a partial order, and any partial order can be characterized as intersection of linear orders \cite{pos41}.
The minimal number of linear orders required for this is called the \emph{order dimension} of the graph.
In this paper, any references to dimension will be w.r.t. order dimension.
We call a graph \emph{$n$-dimensional} if its order dimension is $n$ or less.

In this work we shall focus exclusively on 2-dimensional graphs.
While the use of $n$-dimensional graph classes for larger $n$ would enable ever closer approximations, we note that there is no evidence that NP-hard problems tend to become easier when restricted to $n$-dimensional graphs for $n>2$.
In particular, deciding whether a graph is $n$-dimensional for $n>2$ has been shown to be NP-hard \cite{journals/siamadm/Yannakakis82}.

For most of our results we shall assume that given transitive graphs are acyclic.
For cyclic graphs the results can be applied to their condensation into strongly connected components.

\subsection{Permutation Graphs and Transitive Orientation}\label{S:TO}

We call a graph \emph{transitively orientable} iff it possesses a transitive orientation.
Transitive orientations of graphs are typically represented by providing a total ordering of vertices, which applies only to edges in the graph to be oriented.
As this representation only requires space (near)-linear in the number of vertices, it can be computed efficiently even if the graph is large -- e.g. the complement of a given sparse graph.

An undirected graph is a \emph{permutation graph} iff both it and its complement are transitively orientable.
It is well-known that any transitive orientation of a permutation graph is 2-dimensional, and that reversely the undirected closure of a 2-dimensional graph is a permutation graph.

\begin{lemma}[{\cite[Theorem 3.61]{pos41}}]\label{L:2d-toc}~\\
A transitive graph is 2-dimensional iff its complement is transitively orientable.
\end{lemma}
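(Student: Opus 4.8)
The plan is to exploit the standard correspondence between a transitive acyclic graph and the partial order it carries. Writing $P$ for this order, an ordered pair $(a,b)$ with $a\neq b$ lies in $\ucmp{G}$ exactly when $a$ and $b$ are incomparable in $P$, and since $\ucmp{G}$ is undirected it records every incomparability symmetrically. Thus $\ucmp{G}$ is precisely the incomparability graph of $P$, and the lemma reduces to the claim that $P$ is the intersection of two linear orders if and only if its incomparability graph admits a transitive orientation. I would prove this by giving an explicit construction in each direction.

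For the forward direction, suppose $G$ is 2-dimensional, so $P=L_1\cap L_2$ for linear orders $L_1,L_2$ on $V$; that is, $a<_P b$ iff $a<_{L_1}b$ and $a<_{L_2}b$. I would orient each edge of $\ucmp{G}$ according to $L_1$, selecting $(a,b)$ whenever $a,b$ are incomparable and $a<_{L_1}b$. Since $L_1$ is total this picks out exactly one direction per undirected edge, so it is a genuine orientation, and the incomparable pairs are exactly those ordered oppositely by $L_1$ and $L_2$. To verify transitivity, take selected edges $(a,b)$ and $(b,c)$: then $a<_{L_1}b<_{L_1}c$ while $c<_{L_2}b<_{L_2}a$, so $a$ and $c$ are ordered oppositely, hence incomparable, and the edge between them is oriented $(a,c)$. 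This exhibits a transitive orientation of $\ucmp{G}$.

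For the converse, suppose $F$ is a transitive orientation of $\ucmp{G}$. Every pair of distinct vertices is either comparable in $P$, carrying a directed edge of $G$, or incomparable, carrying a single directed edge of $F$; hence $G\cup F$ is a tournament, and I claim it is transitive, so that it is a linear order $L_1$. If two consecutive edges both come from $G$ we use transitivity of $G$, and if both come from $F$ we use transitivity of $F$. The main obstacle, and the only delicate point, is the mixed case, say $(a,b)\in G$ and $(b,c)\in F$: I would rule out $c<_P a$ (it would force $c<_P b$, contradicting the incomparability of $b,c$) and rule out $(c,a)\in F$ (together with $(b,c)\in F$, transitivity of $F$ would demand the edge $(b,a)$, impossible since $a,b$ are comparable and so carry no edge of $\ucmp{G}$), leaving only $a<_P c$ or $(a,c)\in F$; either way $(a,c)\in G\cup F$. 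The symmetric mixed case is identical.

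Running the same argument on the reversed orientation $F^{-1}$ yields a second transitive tournament $L_2=G\cup F^{-1}$, and a transitive tournament is automatically a strict linear order. Finally $L_1\cap L_2=G$: every comparable pair lies in both tournaments through $G$, whereas each incomparable pair is oriented one way in $L_1$ and the opposite way in $L_2$, so it survives in neither. Hence $P=L_1\cap L_2$ and $G$ has dimension at most two. The crux of the whole argument is the mixed transitivity case in the converse, where the interplay between transitivity of $P$ and transitivity of $F$ must be combined with the fact that comparable pairs carry no edge of the complement.
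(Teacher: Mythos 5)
The paper does not prove this lemma at all --- it imports it verbatim from the literature (Dushnik--Miller, cited as Theorem 3.61 of \cite{pos41}), so there is no in-paper proof to compare against. Your argument is the classical one for exactly that theorem and it is correct and complete: the identification of $\ucmp{G}$ with the incomparability graph, the orientation by $L_1$ in the forward direction (with the observation that incomparable pairs are precisely those on which $L_1$ and $L_2$ disagree, so the composed pair $a,c$ is again incomparable), and in the converse the verification that $G\cup F$ is a transitive tournament, where the mixed case is indeed the only delicate step and you dispatch it correctly by excluding both $(c,a)\in G$ and $(c,a)\in F$. The only implicit hypothesis worth flagging is acyclicity of $G$ (so that $G$ is a strict partial order and each comparable pair contributes exactly one arc to the tournament); the paper does assume this for its transitive graphs, so your proof is a valid self-contained substitute for the citation.
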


The linear orders describing a 2-dimensional digraph $G$ can be found by merging $G$ with the transitive orientation $\mathcal{O}$ of its complement, and with its inverse $\mathcal{O}^{-1}$.

\begin{example}\label{E:transitive-orientation}
Consider the graph shown on the left.
A transitive orientation of the complement of its transitive closure is given in the center.
Merging these graphs results in a linearization $L_1$.
\newcommand{\vdef}{%
\foreach \pos/\name in {%
{(1,2)/A},
{(0,1)/B}, {(1,1)/C}, {(2,1)/D},
{(1,0)/E}, {(2,0)/F}}
    \node[vertex] (\name) at \pos {\name}}
\begin{center}
\hfill
\begin{tikzpicture}[gscale,top]
\vdef;
\foreach \source/\dest in {%
A/B, A/C, A/D, B/E, C/E, D/F}
    \path[edge] (\source) -- (\dest);
\tikzstyle{edge} = [draw,dashed,->]
\path[edge] (A) to [out=-120,in=120] (E);
\path[edge] (A) -- (F);
\tikzcaption{input DAG};
\end{tikzpicture}
\hfill\raisebox{-2cm}{$+$}\hfill
\begin{tikzpicture}[gscale,top]
\vdef;
\foreach \source/\dest in {%
B/C, C/D, C/F, E/D, E/F}
    \path[edge] (\source) -- (\dest);
\path[tedge] (B) -- (F);
\path[tedge] (B) to [bend left=30] (D);
\tikzcaption{TO of complement};
\end{tikzpicture}
\hfill\raisebox{-2cm}{$=$}\hfill
\begin{tikzpicture}[gscale,top]
\vdef;
\foreach \source/\dest in {%
A/B, B/C, C/E, E/D, D/F}
    \path[edge] (\source) -- (\dest);
\foreach \source/\dest in {%
A/C, A/D, A/F, B/E, B/F, C/D, C/F, E/F}
    \path[tedge] (\source) -- (\dest);
\path[tedge] (A) to [bend right=30] (E);
\path[tedge] (B) to [bend left=30] (D);
\tikzcaption{1$^{st}$ linearization};
\end{tikzpicture}
\hfill~
\end{center}

Merging $G$ with the inverse orientation of its complement provides the second linearization $L_2$.
\begin{center}
\hfill
\begin{tikzpicture}[gscale,top]
\vdef;
\foreach \source/\dest in {%
A/B, A/C, A/D, B/E, C/E, D/F}
    \path[edge] (\source) -- (\dest);
\tikzstyle{edge} = [draw,dashed,->]
\path[edge] (A) to [out=-120,in=120] (E);
\path[edge] (A) -- (F);
\tikzcaption{input DAG};
\end{tikzpicture}
\hfill\raisebox{-2cm}{$+$}\hfill
\begin{tikzpicture}[gscale,top]
\vdef;
\foreach \source/\dest in {%
C/B, D/C, F/C, D/E, F/E}
    \path[edge] (\source) -- (\dest);
\path[tedge] (F) -- (B);
\path[tedge] (D) to [bend right=30] (B);
\tikzcaption{inverse TO};
\end{tikzpicture}
\hfill\raisebox{-2cm}{$=$}\hfill
\begin{tikzpicture}[gscale,top]
\vdef;
\foreach \source/\dest in {%
A/D, D/F, F/C, C/B, B/E}
    \path[edge] (\source) -- (\dest);
\foreach \source/\dest in {%
A/C, A/B, A/F, D/C, D/E, F/B, F/E, C/E}
    \path[tedge] (\source) -- (\dest);
\path[tedge] (A) to [bend right=30] (E);
\path[tedge] (D) to [bend right=30] (B);
\tikzcaption{2$^{nd}$ linearization};
\end{tikzpicture}
\hfill~
\end{center}
The partial ordering induced by $G$ can now be represented as $L_1\cap L_2$.
\qed
\end{example}

An efficient algorithm for combining the partial order described by $G^*$ with a transitive orientation of its complement, given as a linear ordering, can e.g. be found in \cite[Proof of Lemma~2]{DBLP:journals/jacm/EvenPL72}.
It is reproduced as Algorithm~\ref{A:merge} below.
The \emph{rank} of a vertex is the number of smaller vertices w.r.t. to some linear ordering.

\begin{calgorithm}{Merge $G$ with transitive orientation of complement}{A:merge}
\Require{Acyclic digraph $G$, linear order $L_{\overline{G}}$ describing transitive orientation $\mathcal{O}$ of $\ucc{G}$}
\Ensure{Linear extension $L$ of $G$ and $\mathcal{O}$}
\Procedure{Merge}{$G,L_{\overline{G}}$}
\State $L=[~]$
\State $S=\{ v\in V \mid v\text{ has no incoming edges}\}$
\While{$S\neq\emptyset$}
\State remove from $S$ vertex $s\in S$ with minimal rank in $L_{\overline{G}}$
\State append $s$ to $L$
\State add successors of $s$ with no other incoming arcs to $S$
\State remove $s$ from $G$
\EndWhile
\State \textbf{return} $L$
\EndProcedure
\end{calgorithm}

Using a heap to represent $S$, we can update $S$ and retrieve the minimal rank vertex $s$ in $O(\log n)$, resulting in an overall running time of $O(m + n\cdot\log n)$.

The issue of finding a transitive orientation of a transitively orientable graph has received much attention in the literature.
We point to ordered vertex partitioning \cite{DBLP:journals/dmtcs/McConnellS00} as a practical algorithm that runs in near-linear time.
A transitive orientation of the complement of a graph, again described as a linear ordering of vertices, can also be obtained in near-linear time with a little tweak \cite[Section 11.2]{DBLP:journals/dmtcs/DahlhausGM02}.

\subsection{Forced Orientation}

At the core of transitive orientations is the forcing relationship between edges, where orientation of one edge forces the orientation of another to maintain transitivity \cite{gallai67}.

\begin{definition}[forced orientation]~\\
Let $G$ be an undirected graph.
We say that two edges $(a,b),(c,d)$ in $G$ \emph{force each other}, denoted as $(a,b)\forced(c,d)$, iff either $a=c$ and $b,d$ are not adjacent, or $b=d$ and $a,c$ are not adjacent.
We denote the transitive closure of relation $\forced$ by $\fcl$.
\end{definition}

The forcing relationship can be linked to the transitivity of orientations as follows:

\begin{lemma}[{\cite[Chapter 5]{Golumbic:2004:AGT:984029}}]\label{L:trans-forced}
An orientation $\mathcal{O}$ of $G$ is transitive iff\\[-1.5em]
\begin{enumerate}[(i)]
\setlength\itemsep{-2pt}
\item $\mathcal{O}$ contains all edges forced by edges in $\mathcal{O}$, and
\item $\mathcal{O}$ does not contain any cycles.
\end{enumerate}
\end{lemma}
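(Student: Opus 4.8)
The plan is to prove both implications directly from the definitions, using throughout that an orientation $\mathcal{O}$ is by definition oriented, so it can never contain both $(x,y)$ and $(y,x)$; in particular it has no cycles of length two before we even invoke condition (ii). This means condition (ii) only does real work for directed cycles of length at least three.

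For the forward direction, I would assume $\mathcal{O}$ is transitive. Acyclicity (ii) is easy: collapsing any directed cycle by repeated application of transitivity yields an arc together with its reverse, contradicting orientedness. For the forcing closure (i), I would take $(a,b)\in\mathcal{O}$ with $(a,b)\forced(c,d)$ and split on the two cases in the definition of $\forced$. If $a=c$ with $b,d$ non-adjacent, suppose toward a contradiction that $(d,a)\in\mathcal{O}$ rather than $(a,d)$; then $(d,a),(a,b)\in\mathcal{O}$, and transitivity forces $(d,b)\in\mathcal{O}\subseteq E$, making $b,d$ adjacent, a contradiction. The case $b=d$ with $a,c$ non-adjacent is symmetric and produces $(a,c)\in E$. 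Hence every forced edge already lies in $\mathcal{O}$.

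For the converse, I would assume (i) and (ii) and take $(a,b),(b,c)\in\mathcal{O}$ with $a\neq c$; the goal is $(a,c)\in\mathcal{O}$. I would case-split on whether $a$ and $c$ are adjacent in $G$. If they are not, then $(a,b)$ and $(c,b)$ share the target $b$ with non-adjacent sources, so $(a,b)\forced(c,b)$; condition (i) then places $(c,b)\in\mathcal{O}$, which together with $(b,c)\in\mathcal{O}$ violates orientedness. Thus $a$ and $c$ must be adjacent, so exactly one of $(a,c),(c,a)$ lies in $\mathcal{O}$. If it were $(c,a)$, then $a\to b\to c\to a$ would be a directed $3$-cycle, contradicting (ii); therefore $(a,c)\in\mathcal{O}$, which is precisely transitivity.

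The step I expect to carry the weight is the non-adjacent case of the converse: the whole point of condition (i) is that it rules out a length-two directed path whose endpoints are non-adjacent, a configuration that transitivity could never repair, since the arc it would require is not even present in $E$. Recognizing that (i) eliminates exactly this obstruction while (ii) handles the remaining adjacent case is the conceptual crux; the surrounding arguments are routine case analysis. A minor point to keep straight is the bookkeeping of the two symmetric halves of the forcing definition, confirming in each that the relevant pair of vertices is genuinely distinct so that transitivity, which requires $a\neq c$, applies.
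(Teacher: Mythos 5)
The paper does not prove this lemma at all: it is quoted from Golumbic's book with a citation, so there is no in-paper argument to compare yours against. Judged on its own, your proof is correct and complete. Both directions are handled properly: in the forward direction you correctly reduce the forcing condition to transitivity applied to the hypothetical reversed arc, and in the converse you correctly identify that condition (i) kills the configuration of a length-two directed path with non-adjacent endpoints (where no arc of $G$ is available to close the triangle), while condition (ii) disposes of the adjacent case via the resulting $3$-cycle. One step you use twice without flagging it explicitly is that for every undirected edge of $G$ exactly one of its two directions lies in $\mathcal{O}$; this follows from the paper's definition of an orientation as a \emph{maximal} oriented subset of $E$, and is worth stating once, since without maximality the claims ``$(d,a)\in\mathcal{O}$ rather than $(a,d)$'' and ``exactly one of $(a,c),(c,a)$ lies in $\mathcal{O}$'' would not be justified. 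Your closing remarks about degenerate cases (2-cycles being excluded by orientedness, and checking distinctness of vertices before invoking transitivity) are accurate and show you have the edge cases under control.
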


\begin{example}\label{E:forced}
Consider the graph $G$ below, which is the complement of the graph from Example~\ref{E:transitive-orientation}.
If we pick an orientation of one edge, e.g. $E\to F$, then the orientation of most other edges is forced by it.
E.g. $E\to F$ forces $E\to D$, which in turn forces $B\to D$.\\[-2em]
\newcommand{\vdef}{%
\foreach \pos/\name in {%
{(1,2)/A},
{(0,1)/B}, {(1,1)/C}, {(2,1)/D},
{(1,0)/E}, {(2,0)/F}}
    \node[vertex] (\name) at \pos {\name}}
\begin{center}
\begin{tikzpicture}[gscale,top]
\vdef;
\foreach \source/\dest in {%
B/C, B/F, C/D, C/F, E/D, E/F}
    \path[uedge] (\source) -- (\dest);
\path[uedge] (B) to [bend left=30] (D);
\end{tikzpicture}
\quad\raisebox{-2cm}{$\xRightarrow[\text{one edge}]{\text{orient}}$}\quad
\begin{tikzpicture}[gscale,top]
\vdef;
\foreach \source/\dest in {%
B/C, B/F, C/D, C/F, E/D}
    \path[uedge] (\source) -- (\dest);
\path[uedge] (B) to [bend left=30] (D);
\path[edge] (E) to (F);
\end{tikzpicture}
\quad\raisebox{-2cm}{$\xRightarrow[\text{orientation}]{\text{forced}}$}\quad
\begin{tikzpicture}[gscale,top]
\vdef;
\path[uedge] (B) to (C);
\foreach \source/\dest in {%
B/F, C/D, C/F, E/D, E/F}
    \path[edge] (\source) -- (\dest);
\path[edge] (B) to [bend left=30] (D);
\end{tikzpicture}
\end{center}
Orienting the remaining unforced edge as $B\to C$ yields the transitive orientation from Example~\ref{E:transitive-orientation}.
Orienting it as $C\to B$ yields another transitive orientation.
\qed
\end{example}

\section{Characterizing 2-dimensional Subgraphs}\label{S:reduction}

We shall now relate 2-dimensional subgraphs of a transitive graph $G$ with orientations of its complement.
It turns out that any maximal 2-dimensional subgraph of $G$ can be obtained by removing edges that lie in the undirected closure of the transitive closure of some orientation of $\ucmp{G}$.
Here we use the term \emph{near-transitive} orientation to indicate that we are interested in orientations whose transitive closure is small.

Consider now a directed acyclic graph $G$ which is \emph{not} transitive.
A critical observation in relating transitive orientations of $\ucc{G}$ and $\ucmp{G}$, if they exist, is that the transitive closures of their forcing relationships are closely related.
We shall say that two edges \emph{directly} force each other if they are related via $\forced\!$, and that they \emph{indirectly} force each other if they are related via $\fcl\!$.

\begin{lemma}\label{L:forcing}
Let $(a,b),(c,d)$ be two edges in $\ucc{G}$ that indirectly force each other in $\ucc{G}$, i.e., we have $(a,b) \fcl (c,d)$.
Then $(a,b),(c,d)$ indirectly force each other in $\ucmp{G}$.
\end{lemma}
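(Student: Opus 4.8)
The plan is to prove a one-step version of the statement and then compose it along a $\fcl$-chain. Since $G\subseteq\tcl{G}$, complementation reverses the inclusion and gives $\ucc{G}\subseteq\ucmp{G}$ as edge sets; hence every edge of $\ucc{G}$ is still an edge of $\ucmp{G}$, and the only quantity that can change between the two graphs is the adjacency of the non-shared endpoints, on which the definition of $\forced$ depends. Because adjacency can only increase when we pass to the larger edge set $\ucmp{G}$, a direct forcing in $\ucc{G}$ can fail in $\ucmp{G}$ only by the non-shared endpoints becoming adjacent.

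First I would reduce to showing: if $\arc{a,b}\forced\arc{c,d}$ in $\ucc{G}$, then $\arc{a,b}\fcl\arc{c,d}$ in $\ucmp{G}$. I treat the shared-source case $a=c$; the shared-target case follows by inverting all edges, and since the condition defining $\forced$ is symmetric in its two arguments, both $\forced$ and $\fcl$ are symmetric relations, so the direction in which any connecting path runs will be irrelevant. If $b,d$ remain non-adjacent in $\ucmp{G}$, the same direct forcing holds there and nothing is needed. The substantive case is when $b,d$ are non-adjacent in $\ucc{G}$ but adjacent in $\ucmp{G}$: unwinding the two complements, this says that $b$ and $d$ are comparable in $\tcl{G}$ yet joined by no edge in $G$, so $G$ contains a directed path of length at least two between them, say $b=v_0\to v_1\to\cdots\to v_k=d$.

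The key step, and the one I expect to be the only real obstacle, is a fan construction anchored at $a$ along this path. I would first check that $a$ is incomparable to every $v_i$ in $\tcl{G}$: if some $v_i$ reached $a$, then composing with $b\to\cdots\to v_i$ would make $b$ reach $a$, contradicting that $\edge{a,b}$ is an edge of $\ucc{G}$; if $a$ reached some $v_i$, then composing with $v_i\to\cdots\to d$ would make $a$ reach $d$, contradicting that $\edge{a,d}$ is an edge of $\ucc{G}$. The endpoint indices $i=0$ and $i=k$ are covered directly by these two hypotheses. Consequently each $\edge{a,v_i}$ is an edge of $\ucc{G}\subseteq\ucmp{G}$. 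Since consecutive vertices $v_i,v_{i+1}$ are joined by an edge of $G$, they are non-adjacent in $\ucmp{G}$, so $\arc{a,v_i}\forced\arc{a,v_{i+1}}$ holds in $\ucmp{G}$ for each $i$; chaining these yields $\arc{a,b}\fcl\arc{a,d}$ in $\ucmp{G}$.

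Finally I would assemble the global statement: expand $\arc{a,b}\fcl\arc{c,d}$ into a chain of direct forcings in $\ucc{G}$, leave untouched each link that survives in $\ucmp{G}$, splice in the fan above in place of each broken link, and concatenate. The remaining effort is purely bookkeeping, so the incomparability argument inside the fan is where I would concentrate the care.
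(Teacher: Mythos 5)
Your proof is correct and follows essentially the same route as the paper's: reduce to a single direct forcing, take the directed path in $G$ witnessing comparability of the non-shared endpoints in $\tcl{G}$, verify by the same two contradiction arguments that the shared endpoint is incomparable to every path vertex (so the fan edges $\arc{a,v_i}$ all lie in $\ucmp{G}$), and chain the resulting direct forcings along the path. The only differences are presentational: you split off the case where the link survives in $\ucmp{G}$ and the shared-target case explicitly, while the paper absorbs these by allowing a path of length one and leaving the symmetric case implicit.
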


\begin{proof}
Let $(x,a),(x,b)$ be two edges in $\ucc{G}$ that force each other directly.
Then either $(a,b)$ or $(b,a)$ lies in $\tcl{G}$, say $(a,b)$.
This means $G$ contains a path $a\to v_1 \to \ldots \to v_n \to b$, for some $n$.

Assume $(x,v_i)$ does not lie in $\ucmp{G}$, for $1\leq i\leq n$.
Then either $(x,v_i)$ or $(v_i,x)$ must lie in $G$.
If $(x,v_i)$ lies in $G$, then $(x,b)$ lies in $\tcl{G}$, contradicting $(x,b)\in\ucc{G}$.
If $(v_i,x)$ lies in $G$, then $(a,x)$ lies in $\tcl{G}$, contradicting $(x,a)\in\ucc{G}$.

Hence $(x,v_i)$ must lie in $\ucmp{G}$ for all $i=1\ldots n$, so the following forcing relationships hold in $\ucmp{G}$:
\[
(x,a)
\quad\Gamma\quad (x,v_1)
\quad\Gamma\quad \ldots
\quad\Gamma\quad (x,v_n)
\quad\Gamma\quad (x,b)
\]

This shows the claim for edges that directly force each other in $\ucc{G}$.
For edges that force each other indirectly, the claim then follows by transitivity of $\fcl$.
\end{proof}

As a consequence, any transitive orientation of $\ucmp{G}$ is a transitive orientation of $\ucc{G}$ as well:

\begin{theorem}\label{T:toc-tocc}
Let $H$ be a transitive orientation of $\ucmp{G}$.
Then $H\cap\ucc{G}$ is a transitive orientation of $\ucc{G}$.
\end{theorem}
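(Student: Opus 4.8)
The plan is to verify the two conditions of Lemma~\ref{L:trans-forced} for the orientation $H\cap\ucc{G}$ of $\ucc{G}$, after first confirming that $H\cap\ucc{G}$ really is an orientation of $\ucc{G}$. The key leverage comes from Lemma~\ref{L:forcing}, which lets me transport forcing relationships from $\ucc{G}$ up to $\ucmp{G}$, where I already know $H$ behaves transitively.

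First I would record the inclusion $\ucc{G}\subseteq\ucmp{G}$: taking the transitive closure only adds edges, so $\tcl{G}\supseteq G$ and hence the complement $\ucc{G}=\ucmp{\tcl{G}}$ is contained in $\ucmp{G}$. Since $H$ is an orientation of $\ucmp{G}$, it selects exactly one direction of every undirected edge of $\ucmp{G}$; restricting to those edges that also lie in $\ucc{G}$ therefore selects exactly one direction of each edge of $\ucc{G}$. Thus $H\cap\ucc{G}$ is a maximal oriented subset of $\ucc{G}$, i.e. an orientation. Condition (ii) of Lemma~\ref{L:trans-forced}, acyclicity, is then immediate: a transitive orientation is acyclic, since a directed cycle would force an edge and its inverse to coexist, and $H\cap\ucc{G}\subseteq H$, so the intersection inherits acyclicity.

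The main work lies in condition (i): I must show $H\cap\ucc{G}$ contains every edge forced, \emph{within} $\ucc{G}$, by one of its edges. Suppose $(a,b)\in H\cap\ucc{G}$ and $(a,b)\forced(c,d)$ in $\ucc{G}$; then $(c,d)$ is an edge of $\ucc{G}$, hence of $\ucmp{G}$, so $H$ contains exactly one of $(c,d)$ or $(d,c)$ and it suffices to rule out the latter. Here Lemma~\ref{L:forcing} does the heavy lifting: direct forcing $(a,b)\forced(c,d)$ in $\ucc{G}$ gives $(a,b)\fcl(c,d)$ in $\ucc{G}$, and the lemma transfers this to $(a,b)\fcl(c,d)$ in $\ucmp{G}$. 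Because $H$ is a transitive orientation of $\ucmp{G}$, condition (i) of Lemma~\ref{L:trans-forced} applied to $H$ shows $H$ is closed under direct forcing in $\ucmp{G}$; iterating this closure along the $\fcl$-chain shows $H$ is closed under indirect forcing as well. As $(a,b)\in H$, I conclude $(c,d)\in H$, and since $\{c,d\}$ is an edge of $\ucc{G}$ this yields $(c,d)\in H\cap\ucc{G}$, establishing (i).

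The one step demanding care is the transfer from $\ucc{G}$ to $\ucmp{G}$: the forcing witnesses in $\ucc{G}$ need not be forcing witnesses in $\ucmp{G}$, because $\ucmp{G}$ has additional edges that can destroy the non-adjacency required by $\forced$. This is precisely the obstacle that Lemma~\ref{L:forcing} removes, by rerouting each direct forcing step through a chain of edges that share a common endpoint and verifiably remain non-adjacent in $\ucmp{G}$; without it, a naive edge-by-edge comparison would fail.
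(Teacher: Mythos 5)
Your proof is correct and follows essentially the same route as the paper's: both verify conditions (i) and (ii) of Lemma~\ref{L:trans-forced} for $H\cap\ucc{G}$, using Lemma~\ref{L:forcing} to transfer forcing in $\ucc{G}$ to forcing in $\ucmp{G}$ and inheriting acyclicity from $H$. Your additional checks (that $H\cap\ucc{G}$ is indeed an orientation of $\ucc{G}$, and the explicit iteration from direct to indirect forcing) are details the paper leaves implicit, not a different approach.
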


\begin{proof}
We show that conditions (i) and (ii) of Lemma~\ref{L:trans-forced} hold for $H\cap\ucc{G}$.
\begin{enumerate}[(i)]
\item By Lemma~\ref{L:forcing} any two edges in $\ucc{G}$ forcing each other also force each other in $\ucmp{G}$.
Thus condition (i) for $H$ implies condition (i) for $H\cap\ucc{G}$.
\item Since $H$ is cycle-free, any subgraph of $H$ must be as well.
\end{enumerate}
\vspace{-2em}
\end{proof}

In particular, for any transitive graph $G$, its complement $\ucmp{G}$ is a graph with transitively orientable complement.
This allows us to construct a 2-dimensional subgraph of $G$:

\begin{corollary}\label{Cor:tocc-2d}
Let $G$ be transitive and $H$ an orientation of $\ucmp{G}$.
Then $G\setminus\utcl{H}$ is 2-dimensional.
\end{corollary}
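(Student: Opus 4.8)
The plan is to apply Lemma~\ref{L:2d-toc}: setting $G'=G\setminus\utcl{H}$, it suffices to show that $G'$ is transitive and that its complement $\ucmp{G'}$ is transitively orientable, acyclicity of $G'$ being inherited from $G$. First I would identify the complement. Writing $x\prec y$ for $\arc{x,y}\in G$ and splitting on each pair $\edge{a,b}$: a non-edge of $G$ lies in $\ucmp{G}=\ucl{H}\subseteq\utcl{H}$, whereas an edge of $G$ remains in $G'$ exactly when it is absent from $\utcl{H}$; together these give $\ucmp{G'}=\utcl{H}$. Since $\tcl{H}$ is transitive, $\utcl{H}$ is transitively orientable: if $\tcl{H}$ is acyclic it is itself a transitive orientation, and otherwise one orients each strongly connected component of $\tcl{H}$ as a linear order and orders the components by the condensation, again producing a transitive orientation. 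This settles the complement condition, leaving only transitivity of $G'$.

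The hard part will be transitivity of $G'$. Take $\arc{a,b},\arc{b,c}\in G'$ with $a\neq c$, so $a\prec b\prec c$ and $\edge{a,b},\edge{b,c}\notin\utcl{H}$; transitivity of $G$ gives $\arc{a,c}\in G$, and the goal is to rule out $\edge{a,c}\in\utcl{H}$. Suppose not; after possibly reversing roles we may take $\arc{a,c}\in\tcl{H}$, giving an $H$-path $a=u_0\to\cdots\to u_m=c$ whose consecutive vertices are $G$-incomparable, as each step lies in $\ucmp{G}$. Let $q$ be the least index with $u_q\not\prec b$; since $u_0=a\prec b$ and $u_m=c\succ b$ we have $1\leq q\leq m$, and $u_{q-1}\prec b$.

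Now $u_q$ cannot satisfy $b\prec u_q$, for then $u_{q-1}\prec b\prec u_q$ would make the consecutive pair $u_{q-1},u_q$ comparable; and $u_q=b$ would already give the $H$-path $a=u_0\to\cdots\to u_q=b$ and hence $\edge{a,b}\in\utcl{H}$. Thus $u_q$ is incomparable to $b$, so $\edge{b,u_q}\in\ucmp{G}$ and is oriented by $H$. The decisive step is to splice this oriented edge into the path regardless of its direction: if $\arc{b,u_q}\in H$ then $b\to u_q\to\cdots\to u_m=c$ is an $H$-path and $\edge{b,c}\in\utcl{H}$, while if $\arc{u_q,b}\in H$ then $a=u_0\to\cdots\to u_q\to b$ is an $H$-path and $\edge{a,b}\in\utcl{H}$ --- either outcome contradicts the choice of $a,b,c$. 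I expect this splicing argument to be the main obstacle, precisely because $H$ is only an \emph{orientation} and need not be transitive, so I would reason throughout via reachability in $H$ rather than via the forcing relation of Lemma~\ref{L:forcing}. With $G'$ transitive and $\ucmp{G'}=\utcl{H}$ transitively orientable, Lemma~\ref{L:2d-toc} gives that $G'$ is 2-dimensional.
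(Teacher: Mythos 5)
Your proof is correct, but it takes a genuinely different route from the paper's. The paper proves the corollary in three lines by swapping the roles of $G$ and $H$: since $H$ is an orientation of $\ucmp{G}$, the graph $G$ is a transitive orientation of $\ucmp{H}$, so Theorem~\ref{T:toc-tocc} (which rests on the forcing-relation Lemma~\ref{L:forcing}) yields directly that $G\setminus\utcl{H}$ is a transitive orientation of $\ucc{H}$, whose complement is then transitively oriented by $\tcl{H}$. You instead verify everything from first principles: you identify $\ucmp{G'}=\utcl{H}$, prove transitivity of $G'$ by taking an $H$-path witnessing $(a,c)\in\tcl{H}$, locating the first vertex $u_q$ not below $b$, and splicing the oriented edge between $b$ and $u_q$ into the path to contradict $(a,b)\notin\utcl{H}$ or $(b,c)\notin\utcl{H}$ --- this is essentially a concrete, reachability-based re-derivation of the special case of Lemma~\ref{L:forcing}/Theorem~\ref{T:toc-tocc} that the corollary needs. (Your case split on $(a,c)$ versus $(c,a)$ in $\tcl{H}$ does go through; the clean way to see the symmetry is to reverse $G$ while keeping $H$ fixed, and in any event the argument runs verbatim in the other direction.) What the paper's route buys is brevity and an explicit duality between $G$ and $H$; what yours buys is self-containedness --- no forcing machinery --- and a more careful treatment of the case where $H$ contains cycles, so that $\tcl{H}$ is not itself an orientation and one must pass to the condensation to transitively orient $\utcl{H}$, a point the paper's one-line appeal to ``transitive orientation $\tcl{H}$'' glosses over.
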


\begin{proof}
As $G$ is a transitive orientation of $\ucmp{H}$, it follows from Theorem~\ref{T:toc-tocc} that $G$ induces a transitive orientation on $\ucc{H}$, namely $G\setminus\utcl{H}$.
Thus $G\setminus\utcl{H}$ is a transitive graph, and its complement has a transitive orientation $\tcl{H}$, which makes it 2-dimensional by Lemma~\ref{L:2d-toc}.
\end{proof}

We illustrate this construction with an example.

\begin{example}\label{E:induced}
Consider the graph $G$ shown on the left.
An orientation $H$ of $\ucmp{G}$ is shown in the center.
\begin{center}
\begin{tikzpicture}[scale=1.5,mid]
\foreach \pos/\name in {%
{(1,2)/A},
{(0,1)/B}, {(1,1)/C}, {(2,1)/D},
{(.5,0)/E}, {(1.5,0)/F}}
    \node[vertex] (\name) at \pos {\name};
\foreach \source/\dest in {%
A/B, A/D, B/E, C/E, C/F, D/F}
    \path[edge] (\source) -- (\dest);
\path[edge] (A) to [bend right=15] (E);
\path[edge] (A) to [bend left =15] (F);
\tikzcaption{Graph $G$};
\end{tikzpicture}
\hspace{1cm}
\begin{tikzpicture}[scale=1.5,mid]
\foreach \pos/\name in {%
{(1,2)/A},
{(0,1)/B}, {(1,1)/C}, {(2,1)/D},
{(.5,0)/E}, {(1.5,0)/F}}
    \node[vertex] (\name) at \pos {\name};
\foreach \source/\dest in {%
A/C, B/C, B/F, C/D, E/D, E/F}
    \path[edge] (\source) -- (\dest);
\path[edge] (B) to [bend right=30] (D);
\path[draw,dashed,->] (A) -- (D);
\tikzcaption{Orientation $H$ (with t-edge)};
\end{tikzpicture}
\hspace{1cm}
\begin{tikzpicture}[scale=1.5,mid]
\foreach \pos/\name in {%
{(1,2)/A},
{(0,1)/B}, {(1,1)/C}, {(2,1)/D},
{(.5,0)/E}, {(1.5,0)/F}}
    \node[vertex] (\name) at \pos {\name};
\foreach \source/\dest in {%
A/B, B/E, C/E, C/F, D/F}
    \path[edge] (\source) -- (\dest);
\path[edge] (A) to [bend right=15] (E);
\path[edge] (A) to [bend left =15] (F);
\tikzcaption{$\tcl{G}\setminus\ucl{\tcl{H}}$};
\end{tikzpicture}
\end{center}
Observe that the only transitivity violation occurs for $A\to C,C\to D$, causing $\tcl{H}$ to contain the extra edge $A\to D$.
When we remove this edge from $\tcl{G}$, we obtain a 2-dimensional subgraph.\qed
\end{example}

We note that not every 2-dimensional subgraph can be constructed using Corollary~\ref{Cor:tocc-2d} (consider e.g. a digraph representing a linear ordering).
However, when approximating $G$ with a 2-dimensional subgraph, we want this subgraph to be as big as possible, either w.r.t. the subgraph relationship (local optimality), or w.r.t. the number of edges (global optimality).
Thus it would be sufficient if all \emph{maximal} 2-dimensional subgraphs could be constructed in this fashion -- and as it turns out, this is indeed the case.

In the following we shall refer to $G\setminus\utcl{H}$ as the subgraph of $G$ \emph{induced} by $H$.

\begin{lemma}\label{L:improve-approx}
Let $G$ be transitive and $S$ a 2-dimensional subgraph of $G$.
Then $\ucmp{S}$ is transitively orientable, and for any transitive orientation $H_S$ of $\ucmp{S}$
the subgraph $S':=G\setminus\utcl{H}$ induced by $H:=H_S\cap\ucmp{G}$ is a 2-dimensional supergraph of $S$.
\end{lemma}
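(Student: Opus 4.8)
The plan is to dispatch the three assertions in turn, leaning entirely on the machinery already in place. The first assertion, that $\ucmp{S}$ is transitively orientable, is immediate: a 2-dimensional graph is in particular transitive, so Lemma~\ref{L:2d-toc} applied directly to $S$ gives exactly this. The remaining work is to show that $S'=G\setminus\utcl{H}$ is both 2-dimensional and a supergraph of $S$.

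First I would verify that $H:=H_S\cap\ucmp{G}$ is genuinely an orientation of $\ucmp{G}$, so that Corollary~\ref{Cor:tocc-2d} becomes applicable. Here the crucial (and easily overlooked) point is that complementation reverses inclusion: from $S\subseteq G$ we get $\ucmp{G}\subseteq\ucmp{S}$ as undirected edge sets. Since $H_S$ assigns a direction to every edge of $\ucmp{S}$, intersecting it with the smaller set $\ucmp{G}$ selects exactly one direction for each edge of $\ucmp{G}$; as a subset of the oriented graph $H_S$ it is oriented, and it is maximal because it covers all of $\ucmp{G}$. With $G$ transitive and $H$ an orientation of $\ucmp{G}$, Corollary~\ref{Cor:tocc-2d} then delivers 2-dimensionality of $S'$ with no further effort.

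The heart of the argument is the containment $S\subseteq S'$, which amounts to showing that $\utcl{H}$ meets no edge of $S$. The chain I would exploit is $H\subseteq H_S$ as directed edge sets (true by the very definition of $H$), which upgrades to $\tcl{H}\subseteq H_S$ precisely because $H_S$ is transitive and $\tcl{H}$ is the \emph{minimal} transitive superset of $H$. Undirected closure preserves inclusion, so $\utcl{H}=\ucl{\tcl{H}}\subseteq\ucl{H_S}=\ucmp{S}$, the final equality holding because $H_S$ is an orientation of $\ucmp{S}$. Since $\ucmp{S}$ is by construction disjoint from $S$, this forces $\utcl{H}\cap S=\emptyset$, and combining with $S\subseteq G$ yields $S\subseteq G\setminus\utcl{H}=S'$.

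I expect the only step demanding genuine care — and hence the main, if modest, obstacle — to be the implication $H\subseteq H_S\Rightarrow\tcl{H}\subseteq H_S$: it is exactly the transitivity of the \emph{chosen} orientation $H_S$ that confines the closure $\utcl{H}$ to $\ucmp{S}$ and thereby guarantees no edge of $S$ is removed. The surrounding manipulations of complements and undirected closures are routine bookkeeping, and the entire argument goes through verbatim for every transitive orientation $H_S$, matching the ``for any'' in the statement.
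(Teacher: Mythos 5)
Your proposal is correct and follows essentially the same route as the paper: Lemma~\ref{L:2d-toc} for transitive orientability, Corollary~\ref{Cor:tocc-2d} for 2-dimensionality of $S'$, and the chain $H\subseteq H_S\Rightarrow\tcl{H}\subseteq H_S\Rightarrow\utcl{H}\subseteq\ucmp{S}$ for the containment $S\subseteq S'$. Your explicit check that $H$ is genuinely an orientation of $\ucmp{G}$ (via $\ucmp{G}\subseteq\ucmp{S}$) is a small addition the paper leaves implicit, but it does not change the argument.
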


\begin{proof}
\setlength\abovedisplayskip{5pt}
\setlength\belowdisplayskip{-1em}
$\ucmp{S}$ is transitively orientable by Lemma~\ref{L:2d-toc},
and by Corollary~\ref{Cor:tocc-2d} $S'$ is 2-dimensional, so it remains to show $S\subseteq S'$.
From $H\subseteq H_S$ and transitivity of $H_S$ we get $\tcl{H}\subseteq H_S$, and thus
\[
S' = G\setminus\utcl{H} \supseteq G\setminus\ucl{H_S} = G\setminus\ucmp{S} = S
\]
\end{proof}

Together with Corollary~\ref{Cor:tocc-2d}, this give us the following:

\begin{theorem}\label{T:induced}
Let $G$ be transitive. Then\\[-1.5em]
\begin{enumerate}[(i)]
\setlength\itemsep{-2pt}
\item every orientation $H$ of $\ucmp{G}$ induces a 2-dimensional subgraph of $G$, and
\item every locally maximal 2-dimensional subgraph of $G$ is induced by some orientation $H$ of $\ucmp{G}$.
\end{enumerate}
\end{theorem}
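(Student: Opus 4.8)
The plan is to assemble Theorem~\ref{T:induced} directly from the two results just established, since each part corresponds to one of them. For part (i), I would simply invoke Corollary~\ref{Cor:tocc-2d}: any orientation $H$ of $\ucmp{G}$ yields the induced subgraph $G\setminus\utcl{H}$, which that corollary already certifies to be 2-dimensional. So part (i) is essentially immediate and requires no further work.

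For part (ii), the key idea is to apply Lemma~\ref{L:improve-approx} to a locally maximal $S$ and show that the supergraph it produces must equal $S$ itself. First I would let $S$ be a locally maximal 2-dimensional subgraph of $G$. By Lemma~\ref{L:improve-approx}, $\ucmp{S}$ is transitively orientable, and for any transitive orientation $H_S$ of $\ucmp{S}$, the orientation $H:=H_S\cap\ucmp{G}$ induces a 2-dimensional supergraph $S':=G\setminus\utcl{H}$ with $S\subseteq S'$. The crucial step is then to note that $S'$ is itself a 2-dimensional subgraph of $G$ containing $S$; since $S$ is \emph{locally} maximal (maximal with respect to the subgraph relation), the containment $S\subseteq S'$ forces $S=S'$. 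Therefore $S=S'=G\setminus\utcl{H}$ is exactly the subgraph induced by the orientation $H$ of $\ucmp{G}$, which is what part (ii) asserts.

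The one point that needs a moment's care is verifying that $H=H_S\cap\ucmp{G}$ is genuinely an \emph{orientation} of $\ucmp{G}$ in the sense of the definition, i.e. a maximal oriented subset of the edges of $\ucmp{G}$, rather than merely some oriented subgraph. Since $S\subseteq G$ we have $\ucmp{G}\subseteq\ucmp{S}$ as undirected graphs, so every undirected edge of $\ucmp{G}$ is an edge of $\ucmp{S}$ and is therefore oriented (exactly one way) by the orientation $H_S$; intersecting with $\ucmp{G}$ thus picks out precisely one directed copy of each such edge, giving a maximal oriented subset of $\ucmp{G}$. I expect this bookkeeping about orientations to be the only real obstacle; everything else follows mechanically from Corollary~\ref{Cor:tocc-2d} and Lemma~\ref{L:improve-approx} together with the definition of local maximality.
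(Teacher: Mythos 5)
Your proposal is correct and follows essentially the same route as the paper: part (i) is read off from Corollary~\ref{Cor:tocc-2d}, and part (ii) applies Lemma~\ref{L:improve-approx} to a locally maximal $S$ and uses maximality to force $S'=S$. Your additional check that $H=H_S\cap\ucmp{G}$ is a genuine (maximal oriented) orientation of $\ucmp{G}$ is a detail the paper leaves implicit, and your argument for it via $\ucmp{G}\subseteq\ucmp{S}$ is sound.
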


\begin{proof}
Condition (i) is just a restatement of Corollary~\ref{Cor:tocc-2d}.

To show (ii) let $S$ be a locally maximal 2-dimensional subgraph of $G$.
By Lemma~\ref{L:improve-approx} there exists a 2-dimensional subgraph $S'$ with $S\subseteq S'$ such that $S'$ is induced by some orientation $H$ of $\ucmp{G}$.
But since $S$ is locally maximal we must have $S'=S$, so $S$ is induced by $H$.
\end{proof}

Theorem~\ref{T:induced} reduces the problem of finding a maximal 2-dimensional subgraph of $G$ to that of finding an optimal near-transitive orientation of $\ucmp{G}$, that is, an orientation with minimal transitive closure.

\subsection{Undirected graphs}

As seen in Lemma~\ref{L:2d-toc}, the notions of 2-dimensional graph and permutation graph are closely related.
This relationship can be strengthened further for maximal subgraphs.

\begin{lemma}\label{L:2d-perm}
Let $G$ be transitive and $S$ a subgraph of $G$.
Then $S$ is a (locally) maximal 2-dimensional subgraph of $G$ iff $\ucl{S}$ is a (locally) maximal permutation subgraph of $\ucl{G}$.
\end{lemma}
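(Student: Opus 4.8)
The plan is to reduce everything to a single order-preserving dictionary between subgraphs of $G$ and undirected subgraphs of $\ucl G$, and then to transport (local) maximality across it using the construction already developed in Corollary~\ref{Cor:tocc-2d} and Lemma~\ref{L:improve-approx}.

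First I would record the backbone of the correspondence. Since $G$ is transitive and acyclic it is an orientation of $\ucl G$, so every undirected edge of $\ucl S\subseteq\ucl G$ carries a unique $G$-orientation; this gives $S=\ucl S\cap G$ for every subgraph $S\subseteq G$, and shows that $S\mapsto\ucl S$ and $U\mapsto U\cap G$ are mutually inverse. Both are monotone, and monotone in both directions: for $S_1,S_2\subseteq G$ one has $S_1\subseteq S_2$ iff $\ucl{S_1}\subseteq\ucl{S_2}$, with strict inclusions matching (an undirected edge of $\ucl{S_2}\setminus\ucl{S_1}$ lies in $\ucl G$ and its $G$-orientation lies in $S_2\setminus S_1$, and conversely). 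Finally, if $S$ is $2$-dimensional then $S$ is a transitive orientation of $\ucl S$ and $\ucmp S=\ucmp{\ucl S}$ is transitively orientable, so $\ucl S$ is a permutation subgraph of $\ucl G$ by Lemma~\ref{L:2d-toc}.

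The engine of the proof is the observation that the supergraph construction of Lemma~\ref{L:improve-approx} does not actually require its input to be $2$-dimensional. I would note that for any subgraph $S\subseteq G$ whose complement $\ucmp S$ is transitively orientable --- equivalently, whenever $\ucl S$ is a permutation graph --- choosing a transitive orientation $H_S$ of $\ucmp S$ and setting $H:=H_S\cap\ucmp G$ yields a $2$-dimensional $S':=G\setminus\utcl H$ by Corollary~\ref{Cor:tocc-2d}, while the purely set-theoretic chain from the proof of Lemma~\ref{L:improve-approx} ($\tcl H\subseteq H_S$, hence $\utcl H\subseteq\ucl{H_S}=\ucmp S$, hence $S'\supseteq G\setminus\ucmp S=S$) still gives $S\subseteq S'$. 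I expect the main temptation to be the naive route of showing that $U\cap G$ is a $2$-dimensional subgraph for every permutation subgraph $U$ of $\ucl G$; this is exactly the step that fails, since the $G$-orientation of $U$ need not be transitive (by Lemma~\ref{L:trans-forced} it may violate a forced edge at a vertex whose two $U$-neighbours are non-adjacent), so the complement-orientation engine is what lets us avoid proving transitivity directly.

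With these in hand both implications are short. For the forward direction, assume $S$ is a locally maximal $2$-dimensional subgraph, so $\ucl S$ is a permutation subgraph by the first step; a strictly larger permutation subgraph $U\supsetneq\ucl S$ has $\ucmp G=\ucmp{\ucl G}\subseteq\ucmp U$ transitively orientable, so feeding a transitive orientation of $\ucmp U$ through the engine produces a $2$-dimensional $S'\supseteq G\cap U\supsetneq S$, contradicting maximality of $S$. For the converse, assume $\ucl S$ is a locally maximal permutation subgraph; the engine applied to $S$ gives a $2$-dimensional $S'\supseteq S$ with $\ucl{S'}\supseteq\ucl S$ a permutation subgraph, so maximality of $\ucl S$ forces $\ucl{S'}=\ucl S$ and hence $S'=\ucl{S'}\cap G=\ucl S\cap G=S$, proving $S$ is $2$-dimensional; any $2$-dimensional $S''\supsetneq S$ would give $\ucl{S''}\supsetneq\ucl S$ and contradict maximality, so $S$ is locally maximal. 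The global (edge-count) version follows by the same argument, since $\ucl S$ has exactly $|E(S)|$ undirected edges and the two notions of maximum size therefore coincide under the dictionary.
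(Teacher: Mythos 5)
Your proof is correct, and its skeleton is the one the paper uses (dictionary between $S$ and $\ucl{S}$; ``2-dimensional $\Rightarrow$ permutation'' unconditionally; ``maximal permutation $\Rightarrow$ 2-dimensional'' via an enlargement argument; then transfer of maximality). The genuine difference is in the enlargement step. The paper's proof is leaner: it applies Theorem~\ref{T:toc-tocc} with $S$ in the role of $G$ to conclude that $\ucc{S}$ is transitively orientable whenever $\ucmp{S}$ is, so that $\ucl{\tcl{S}}$ is already a permutation subgraph of $\ucl{G}$; maximality of $\ucl{S}$ then forces $\ucl{\tcl{S}}=\ucl{S}$, hence $\tcl{S}=S$, and Lemma~\ref{L:2d-toc} finishes. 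You instead observe, correctly, that the proof of Lemma~\ref{L:improve-approx} only uses 2-dimensionality of $S$ to get transitive orientability of $\ucmp{S}$, so its conclusion holds for any subgraph whose undirected closure is a permutation graph; you then use the induced supergraph $S'\supseteq S$ (a potentially much larger enlargement than $\tcl{S}$) both to handle the converse direction and, applied to $U\cap G$ for a hypothetical larger permutation subgraph $U$, to make the forward maximality transfer explicit where the paper just asserts it via the dictionary. What your route buys is a single reusable ``engine'' and a fully spelled-out correspondence (including the matching of strict inclusions and the edge-count argument for global maximality, which the paper leaves implicit); what it costs is invoking the heavier Corollary~\ref{Cor:tocc-2d}/Lemma~\ref{L:improve-approx} machinery where the transitive closure alone suffices. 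One phrasing slip worth fixing: in the forward direction the property you need is that $\ucmp{U}$ is transitively orientable (from $U$ being a permutation graph) together with $\ucmp{G}\subseteq\ucmp{U}$ so that $H_U\cap\ucmp{G}$ is an orientation of $\ucmp{G}$; the clause ``has $\ucmp{G}\subseteq\ucmp{U}$ transitively orientable'' conflates these two facts, though both are true and the argument stands.
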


\begin{proof}
If $\ucl{S}$ is a permutation graph, then $\ucmp{S}$ is transitively orientable, and by Theorem~\ref{T:toc-tocc} so is $\ucc{S}$.
This makes $\ucl{\tcl{S}}$ a permutation subgraph of $\ucl{G}$.
If $\ucl{S}$ is maximal, then we must have $\ucl{\tcl{S}}=\ucl{S}$ and thus $\tcl{S}=S$, so $S$ is 2-dimensional by Lemma~\ref{L:2d-toc}.

Conversely, if $S$ is a 2-dimensional subgraph of $G$, it follows by Lemma~\ref{L:2d-toc} that $\ucl{S}$ is a permutation subgraph of $\ucl{G}$.
Maximality of one now clearly implies maximality of the other.
\end{proof}

As a consequence, permutation subgraphs of a transitively orientable graph can be characterized in the same way as 2-dimensional subgraphs of a transitive graph.

\begin{corollary}\label{C:induced-undirected}
Let $G$ be transitively orientable. Then\\[-1.5em]
\begin{enumerate}[(i)]
\setlength\itemsep{-2pt}
\item every orientation $H$ of $\ucmp{G}$ induces a permutation subgraph of $G$, and
\item every locally maximal permutation subgraph of $G$ is induced by some orientation $H$ of $\ucmp{G}$.
\end{enumerate}
\end{corollary}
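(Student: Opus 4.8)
The plan is to reduce Corollary~\ref{C:induced-undirected} to the directed statement of Theorem~\ref{T:induced} by passing through a fixed transitive orientation of $G$. Since $G$ is transitively orientable, I would first fix a transitive orientation $G_0$ of $G$, so that $G_0$ is a transitive graph with $\ucl{G_0}=G$. The complement operator depends only on the undirected closure of its argument, so $\ucmp{G_0}=\ucmp{G}$; consequently an orientation $H$ of $\ucmp{G}$ is exactly an orientation of $\ucmp{G_0}$, and Theorem~\ref{T:induced} applies verbatim to $G_0$ and $H$. The second preliminary step is a bookkeeping identity relating the two notions of \emph{induced} subgraph: because $\utcl{H}$ is undirected, an edge $(a,b)$ lies in $\utcl{H}$ iff $(b,a)$ does, which yields
\[
\ucl{G_0\setminus\utcl{H}} = \ucl{G_0}\setminus\utcl{H} = G\setminus\utcl{H}.
\]
Thus the undirected closure of the subgraph of $G_0$ induced by $H$ is precisely the subgraph of $G$ induced by $H$ in the undirected sense.

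With these in hand, part~(i) is almost immediate: by Theorem~\ref{T:induced}(i) the orientation $H$ induces a 2-dimensional subgraph $S:=G_0\setminus\utcl{H}$ of $G_0$, and by Lemma~\ref{L:2d-perm} (or directly Lemma~\ref{L:2d-toc}) its undirected closure $\ucl{S}=G\setminus\utcl{H}$ is a permutation subgraph of $\ucl{G_0}=G$.

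For part~(ii) I would start from a locally maximal permutation subgraph $P$ of $G$ and recover a matching directed subgraph of $G_0$. The natural candidate is $S:=P\cap G_0$, the restriction of the orientation $G_0$ to the edges of $P$. Since $P\subseteq\ucl{G_0}$ and $G_0$ is an orientation, every undirected edge of $P$ is witnessed by exactly one directed edge in $G_0$, so I expect $\ucl{S}=P$. Now Lemma~\ref{L:2d-perm}, applied to the subgraph $S$ of the transitive graph $G_0$, converts the local maximality of $\ucl{S}=P$ as a permutation subgraph into local maximality of $S$ as a 2-dimensional subgraph. Theorem~\ref{T:induced}(ii) then supplies an orientation $H$ of $\ucmp{G_0}=\ucmp{G}$ with $S=G_0\setminus\utcl{H}$, whence $P=\ucl{S}=G\setminus\utcl{H}$ by the identity above, i.e. $P$ is induced by $H$.

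The main obstacle is conceptual rather than computational: all of the genuine content has already been packaged into Theorem~\ref{T:induced} and Lemma~\ref{L:2d-perm}, so the remaining work is to check that the two "induced" notions line up and to build the correct directed witness $S=P\cap G_0$. The one point needing care is that I invoke Lemma~\ref{L:2d-perm} on $S$ \emph{before} knowing $S$ is 2-dimensional; this is legitimate because the lemma is stated as a genuine equivalence for an arbitrary subgraph of a transitive graph, so the implication from "$\ucl{S}$ locally maximal permutation" to "$S$ locally maximal 2-dimensional" is available without any prior assumption on $S$.
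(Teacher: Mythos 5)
Your proposal is correct and follows exactly the route the paper intends: the corollary is stated there as an immediate consequence of Lemma~\ref{L:2d-perm} combined with Theorem~\ref{T:induced}, which is precisely the reduction you carry out (fixing a transitive orientation $G_0$ with $\ucmp{G_0}=\ucmp{G}$, using the identity $\ucl{G_0\setminus\utcl{H}}=G\setminus\utcl{H}$, and transferring local maximality via the equivalence in Lemma~\ref{L:2d-perm}). The paper leaves these bookkeeping steps implicit; your write-up merely makes them explicit.
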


\subsection{Improving Approximations}\label{S:improving}

Ideally we would like an efficient algorithm for finding optimal near-transitive orientations.
Until such an algorithm is discovered (if one exists -- the hardness of this optimization problem is currently open), 
an alternative could be to use Lemma~\ref{L:improve-approx} to improve a given 2-dimensional subgraph, such as the one provided by the tree-cover algorithm \cite{DBLP:conf/sigmod/AgrawalBJ89}.
The example below illustrates this approach.

\begin{example}
Consider the following graph $G$, and a tree-cover $T$ of $G$ (transitive edges omitted):\\
\begin{center}
\begin{tikzpicture}[scale=1.5,mid]
\foreach \pos/\name in {%
{(1,2)/A},
{(0,1)/B}, {(1,1)/C}, {(2,1)/D},
{(0,0)/E}, {(1,0)/F}, {(2,0)/G}}
    \node[vertex] (\name) at \pos {\name};
\foreach \source/\dest in {%
A/B, A/C, A/D, B/E, B/F, C/E, C/G, D/F, D/G}
    \path[edge] (\source) -- (\dest);
\tikzcaption{Graph $G$};
\end{tikzpicture}
\hspace{1cm}
\begin{tikzpicture}[scale=1.5,mid]
\foreach \pos/\name in {%
{(1,2)/A},
{(0,1)/B}, {(1,1)/C}, {(2,1)/D},
{(0,0)/E}, {(1,0)/F}, {(2,0)/G}}
    \node[vertex] (\name) at \pos {\name};
\foreach \source/\dest in {%
A/B, A/C, A/D, B/E, B/F, D/G}
    \path[edge] (\source) -- (\dest);
\tikzcaption{Tree-cover $T$ of $G$};
\end{tikzpicture}
\end{center}
As $T$ is 2-dimensional, we can obtain a transitive orientation $H_T$ of its complement.\\
\begin{center}
\begin{tikzpicture}[scale=1.5,mid]
\foreach \pos/\name in {%
{(1,2)/A},
{(0,1)/B}, {(1,1)/C}, {(2,1)/D},
{(0,0)/E}, {(1,0)/F}, {(2,0)/G}}
    \node[vertex] (\name) at \pos {\name};
\foreach \source/\dest in {%
B/C, B/G, C/D, C/G, E/C, E/D, E/F, F/C, F/D, F/G}
    \path[edge] (\source) -- (\dest);
\path[edge] (B) to [bend left=30] (D);
\path[edge] (E) to [bend right=30] (G);
\tikzcaption{$H_T$: TO of complement};
\end{tikzpicture}
\hspace{1cm}
\begin{tikzpicture}[scale=1.5,mid]
\foreach \pos/\name in {%
{(1,2)/A},
{(0,1)/B}, {(1,1)/C}, {(2,1)/D},
{(0,0)/E}, {(1,0)/F}, {(2,0)/G}}
    \node[vertex] (\name) at \pos {\name};
\foreach \source/\dest in {%
B/C, B/G, C/D, E/D, E/F, F/C, F/G}
    \path[edge] (\source) -- (\dest);
\path[edge] (B) to [bend left=30] (D);
\path[edge] (E) to [bend right=30] (G);
\path[draw,dashed,->] (F) -- (D);
\tikzcaption{$H$: restriction of $H_T$};
\end{tikzpicture}
\hspace{1cm}
\begin{tikzpicture}[scale=1.5,mid]
\foreach \pos/\name in {%
{(1,2)/A},
{(0,1)/B}, {(1,1)/C}, {(2,1)/D},
{(0,0)/E}, {(1,0)/F}, {(2,0)/G}}
    \node[vertex] (\name) at \pos {\name};
\foreach \source/\dest in {%
A/B, A/C, A/D, B/E, B/F, C/E, C/G, D/G}
    \path[edge] (\source) -- (\dest);
\tikzcaption{supergraph $T'$ of $T$};
\end{tikzpicture}
\end{center}
Restricting $H_T$ to $\ucmp{G}$ give us the near-transitive graph $H$ shown in the center.
Removing from $G$ the edges in $\ucl{\tcl{H}}$, namely $D\to F$, results in the 2-dimensional supergraph $T'$ of $T$.\qed
\end{example}

For some 2-dimensional subgraphs $S$, a transitive orientation of $\ucmp{S}$ can be found easily, e.g. when $S$ is a tree.
More general cases could be handled by any algorithm for finding transitive orientations -- in particular, the ordered vertex partitioning approach of \cite{DBLP:journals/dmtcs/McConnellS00}, together with a tweak from \cite[Section 11.2]{DBLP:journals/dmtcs/DahlhausGM02}, allows us to construct a transitive orientation of $\ucmp{S}$ in time near-linear in the size of $S$.

\section{Computing Induced Subgraphs}

In order to make the problem reduction described in Section~\ref{S:reduction} effective, we require a fast algorithm for constructing the 2-dimensional subgraph induced by an orientation of the complement.

Since $G_H:=G\setminus\utcl{H}$ is 2-dimensional, we can describe it with two linear orderings, such as
$G_H\cup\tcl{H}$ and $G_H\cup(\tcl{H})^{-1}$, as illustrated in Example~\ref{E:transitive-orientation}.
We therefore require an algorithm which takes $G$ and a linear ordering describing $H$ as input, and returns the linear ordering describing $G_H\cup\tcl{H}$.
As $H$ can be much larger than $G$, we want this algorithm to be near-linear in the size of $G$.

\begin{example}
Consider again the graph from Example~\ref{E:induced}, reproduced below.
The linear ordering BEACDF, short for $B<E<A<C<D<F$, describes the orientation of its complement given in the center.
\begin{center}
\begin{tikzpicture}[scale=1.5,mid]
\foreach \pos/\name in {%
{(1,2)/A},
{(0,1)/B}, {(1,1)/C}, {(2,1)/D},
{(.5,0)/E}, {(1.5,0)/F}}
    \node[vertex] (\name) at \pos {\name};
\foreach \source/\dest in {%
A/B, A/D, B/E, C/E, C/F, D/F}
    \path[edge] (\source) -- (\dest);
\path[edge] (A) to [bend right=15] (E);
\path[edge] (A) to [bend left =15] (F);
\tikzcaption{Graph $G$};
\end{tikzpicture}
\hspace{1cm}
\begin{tikzpicture}[scale=1.5,mid]
\foreach \pos/\name in {%
{(1,2)/A},
{(0,1)/B}, {(1,1)/C}, {(2,1)/D},
{(.5,0)/E}, {(1.5,0)/F}}
    \node[vertex] (\name) at \pos {\name};
\foreach \source/\dest in {%
A/C, B/C, B/F, C/D, E/D, E/F}
    \path[edge] (\source) -- (\dest);
\path[edge] (B) to [bend right=30] (D);
\path[draw,dashed,->] (A) -- (D);
\tikzcaption{Orientation $H$ (with t-edge)};
\end{tikzpicture}
\hspace{1cm}
\begin{tikzpicture}[scale=1.5,mid]
\foreach \pos/\name in {%
{(1,2)/A},
{(0,1)/B}, {(1,1)/C}, {(2,1)/D},
{(.5,0)/E}, {(1.5,0)/F}}
    \node[vertex] (\name) at \pos {\name};
\foreach \source/\dest in {%
A/B, B/E, C/E, C/F, D/F}
    \path[edge] (\source) -- (\dest);
\path[edge] (A) to [bend right=15] (E);
\path[edge] (A) to [bend left =15] (F);
\tikzcaption{$G_H=\tcl{G}\setminus\ucl{\tcl{H}}$};
\end{tikzpicture}
\end{center}
We are looking for an algorithm that combines $G$ and BEACDF into ABCEDF describing $G_H\cup\tcl{H}$.
The same algorithm can then be employed to combine $G$ and the inverse orientation FDCAEB into DCAFBE describing $G_H\cup(\tcl{H})^{-1}$.
$G_H$ is now represented as the intersection of ABCEDF and DCAFBE.\qed
\end{example}

To design such an algorithm, we employ the same approach as in Algorithm~\ref{A:merge}, that is:
\begin{enumerate}
\setlength\itemsep{-2pt}
\item We keep track of all vertices whose ancestors in $H$ have been processed, and
\item among those we pick the minimal one w.r.t. the ordering imposed by $G$.
\end{enumerate}

As nodes will have a large number of ancestors in $H$ when $G$ is sparse, we do not track the set of unprocessed ancestors, but only their number, which we shall refer to as the \emph{countdown} of a vertex.
The initial countdown value for a vertex $v$ can easily be computed as
\newcommand\hrank{\text{rank}_H}
\[
\text{countdown}(v) = \hrank(v) - |\{\; (v,w)\in \ucl{G} \mid \hrank(w) < \hrank(v) \;\}|
\]
where $\hrank(v)$ denotes the number of vertices smaller than $v$ in the linear ordering describing $H$.

When we process a vertex $v$, i.e., append it to the current output ordering, we would then need to reduce the countdown for all descendants of $v$ in $H$.
To avoid the complexity of doing this, we instead \emph{increase} the countdown for all \emph{non-}descendants of $v$ in $H$, which modifies the countdown function to denote the number of unprocessed ancestors plus the number of processed vertices.
Thus all ancestors of $v$ have been processed once its countdown reaches the number of processed vertices, which is easy to track.

Note though that the number of non-descendants of $v$ in $H$ can still be huge (on average it will be even larger than the number of descendants).
However, non-descendants of $v$ are either ancestors of $v$ in $H$ or neighbors of $v$ in $G$.
As ancestors of $v$ must have been processed already by the time we process $v$, we no longer need to keep their countdown updated.
This leaves only (unprocessed) neighbors of $v$ in $G$ to be updated, and the total number of those is bounded by the size of $G$.

We summarize this description as Algorithm~\ref{A:cmerge}.

\newcommand\ctd{\text{countdown}}
\begin{calgorithm}{Merge $H$ with transitive orientation of complement}{A:cmerge}
\Require{Transitive DAG $G$, linear order $L_H$ describing orientation $H$ of $\ucmp{G}$}
\Ensure{Linear order $L$ describing $G_H\cup\tcl{H}$}
\Procedure{Complement-Merge}{$G,L_H$}
\For{$v\in V$}
\State $\ctd(v) \gets \hrank(v) - |\{(v,w)\in \ucl{G} \mid \hrank(w) < \hrank(v) \}|$
\EndFor
\State $S\gets\{ \}$; $L\gets[~]$; $L_G \gets$ some linearization of $G$
\For{$i=0..(|V|-1)$}
\State add vertices $v$ with $\ctd(v)=i$ to $S$
\State remove from $s$ vertex $s\in S$ with minimal rank in $L_G$
\State append $s$ to $L$
\For{neighbors $v$ of $s$ in $G$}
\If{$\ctd(v) > i$}
\State $\ctd(v) \gets \ctd(v) + 1$
\EndIf
\EndFor
\EndFor
\State \textbf{return} $L$
\EndProcedure
\end{calgorithm}

Correctness of Algorithm~\ref{A:cmerge} should be evident from the preceeding discussion.
The first for-loop can be implemented in $O(m)$ by iterating once over the edges in $\ucl{G}$.
Using a suitable data structure for tracking vertices by countdown,
e.g. an array of lists of vertices together with pointers into these lists,
and a heap for $S$ as in Algorithm~\ref{A:merge},
one can implement the second for-loop to run in $O(m + n\cdot\log n)$.

\section{Conclusion}

With Theorem~\ref{T:induced} we have established a tight relationship between 2-dimensional subgraphs of a transitive graph $G$ (or permutation subgraphs of a transitively orientable graph $G$), and orientations of the complement of $G$.
Together with Algorithm~\ref{A:cmerge} for computing the 2-dimensional subgraph induced by such an orientation,
this reduces the problem of finding maximal 2-dimensional subgraphs (or permutation subgraphs) to that of finding optimal near-transitive orientations.

This problem reduction is of course only a first step, with many of open problems that still need to be addressed.
Foremost is the issue of finding optimal near-transitive orientations - a polynomial time algorithm or proof of the problem's hardness would be the logical next step here.
Until then, the approach described in Section~\ref{S:improving} for extending an existing 2-dimensional subgraph may be of some use.

Furthermore, transitive graphs typically arise as the transitive closure of a given graph (e.g. for reachability queries).
It would thus be helpful if Algorithm~\ref{A:cmerge} could be improved to work with non-transitive graphs, without explicitly computing the transitive closure first.

Finally, we briefly suggested that 2-dimensional subgraphs might be a suitable replacement for trees in the design of tracktable algorithm.
This idea still remains to be explored.

\bibliography{graph.bib}{}
\bibliographystyle{plain}

\end{document}